\newtheorem{prop}{Proposition}
\title{Re-Evaluating Privacy in Centralized and Decentralized Learning: An Information-Theoretical and Empirical Study}
\author{
  Changlong Ji \\
  Telecom Sudparis \\
  Institut Polytechnique de Paris \\
  France\\
  \texttt{changlong.ji@telecom-sudparis.eu} \\
  \And
  Stephane Maag \\
  Telecom Sudparis \\
  Institut Polytechnique de Paris \\
  France\\
  \texttt{stephane.maag@telecom-sudparis.eu} \\
  \And
  \hspace{-3em}Richard Heusdens \\
   \hspace{-3em}Netherlands Defence Academy \\
  \hspace{-3em} Delft University of Technology \\
   \hspace{-3em}Netherlands\\
  \hspace{-3em} \texttt{r.heusdens@tudelft.nl} \\
   \And
  \hspace{2em} Qiongxiu Li \\
  \hspace{2em} Aalborg University \\
  \hspace{2em} Denmark\\
  \hspace{2em} \texttt{qili@es.aau.dk} \\
}
\begin{document}
\maketitle

\begin{abstract}

Decentralized Federated Learning (DFL) has garnered attention for its robustness and scalability compared to Centralized Federated Learning (CFL). While DFL is commonly believed to offer privacy advantages due to the decentralized control of sensitive data, recent work by Pasquini et, al. challenges this view, demonstrating that DFL does not inherently improve privacy against empirical attacks under certain assumptions. For investigating fully this issue, a formal theoretical framework is required. Our study offers a novel perspective by conducting a rigorous information-theoretical analysis of privacy leakage in FL using mutual information. We further investigate the effectiveness of privacy-enhancing techniques like Secure Aggregation (SA) in both CFL and DFL. Our simulations and real-world experiments show that DFL generally offers stronger privacy preservation than CFL in practical scenarios where a fully trusted server is not available. We address discrepancies in previous research by highlighting limitations in their assumptions about graph topology and privacy attacks, which inadequately capture information leakage in FL.

\end{abstract}

\keywords{Decentralization, federated learning, privacy, membership inference attack, gradient inversion attack, secure aggregation}

\section{Introduction}
Federated Learning (FL) enables collaborative training without sharing raw data \cite{McMahan2016CommunicationEfficientLO}, using either centralized or decentralized topologies\cite{MartnezBeltrn2022DecentralizedFL,Li2019FederatedLC,Shi2023ImprovingTM,yu2024provable,li2022privacy}. Centralized FL (CFL), which relies on a central server for node coordination, is widely adopted but suffers from high bandwidth demands \cite{Hu2019DecentralizedFL, Daily2018GossipGraDSD} and introduces risks such as single points of failure \cite{MartnezBeltrn2022DecentralizedFL}. 
Although FL avoids direct sharing of raw data, it remains susceptible to privacy breaches  \cite{Geng2021TowardsGD,Boenisch2021WhenTC,Li2023TopologyDependentPB}. Such breaches can occur during the exchange of gradients or model weights, which can be exploited by attacks, such as membership inference \cite{Shokri2016MembershipIA}, where an adversary determines whether a specific data point was part of the training set, and gradient inversion which can reconstruct original data from shared gradients with high fidelity \cite{2019arXiv190608935Z,geiping2020inverting}. To mitigate these vulnerabilities, cryptographic techniques like Differential Privacy (DP) \cite{9714350} and SA \cite{2016arXiv161104482B} have been proposed. DP introduces noise into model updates to protect data, but often results in a trade-off between accuracy and privacy \cite{9413764}, whereas SA maintains accuracy by only sharing aggregated updates, though at the expense of increased communication overhead \cite{2020arXiv201205433C}.

Given that privacy concerns are a core motivation for FL \cite{McMahan2016CommunicationEfficientLO}, it is crucial to determine which graph topology, centralized or decentralized, is more effective for privacy preservation. However, there is ongoing debate on this complex question.  Some studies \cite{cheng2019towards,vogels2021relaysum} claim that the DFL is inherently more privacy-preserving than CFL, arguing that the sensitive information, such as private data and model parameters, is no longer concentrated in a central server.  Yet, these claims are primarily intuitive and lack both theoretical and empirical support. Pasquini et al. \cite{Pasquini2022OnTP} challenge this view, demonstrating through empirical attacks that DFL does not consistently provide better privacy protection than CFL and, in some cases, may increase vulnerability to privacy breaches. However, these findings are largely empirical and lack a rigorous theoretical foundation. The ongoing debate has motivated us to perform a thorough analysis. Our main contributions are: 

 \begin{itemize}
     \item We present a comprehensive information-theoretic analysis of privacy leakage in CFL and DFL, using mutual information as the primary metric.
     \item Our simulations and empirical privacy attacks reveal that DFL generally provides stronger privacy protection than CFL, particularly in scenarios without a fully trusted server, highlighting DFL’s effectiveness in mitigating privacy risks. 
     \item We addressed inconsistencies in prior research by identifying key limitations in their assumptions about graph topology and privacy attacks, which may not fully reflect the dynamics of information leakage in FL.
 \end{itemize}

\section{Preliminaries}

\subsection{Centralized FL}

    
    


In CFL, a centralized server is assumed to be connected to \(n\) clients, where \(\mathbf{w}^{(t)}\) denotes the model weights at iteration \(t\). The typical FL protocol, FedAvg \cite{McMahan2016CommunicationEfficientLO}, starts by initializing the global model weights \(\mathbf{w}^{(0)}\) at iteration \(t = 0\). Each client then trains the model on its local dataset for a fixed number of epochs, resulting in updated local model weights \(\mathbf{w}_i^{(t)}\). These updated weights are aggregated by the central server through averaging. The same idea applies to gradient sharing in the FedSGD \cite{McMahan2016CommunicationEfficientLO} protocol, where instead of sharing model weights, clients compute local gradients \(\mathbf{g}_i^{(t)}\), and the server aggregates these gradients and updates the global model as follows:
\begin{equation}
\mathbf{w}^{(t+1)} = \mathbf{w}^{(t)} - \frac{\eta}{n} \sum_{i=1}^{n} \mathbf{g}_i^{(t)},
\label{eq:fedsgd}
\end{equation}
where \(\eta > 0\) is the stepsize of the gradient update.

\subsection{Decentralized FL}
DFL works for a distributed network, which is often modeled as an undirected graph $\mathcal{G} = (\mathcal{V}, \mathcal{E})$, where $\mathcal{V} = \{1, 2, \ldots, n\}$ denotes the set of nodes, and $\mathcal{E} \subseteq \mathcal{V} \times \mathcal{V}$ represents the set of edges. It is important to note that node $i$ can only communicate with its neighboring node $j$ if and only if $(i, j) \in \mathcal{E}$. $m=|\mathcal{E}|$ is the number of edges. 

Protocols for DFL typically follows the same update procedure as the centralized approach, except at the model aggregation step. In this step, peer-to-peer distributed algorithms, such as linear iteration \cite{Olshevsky2009ConvergenceSI}, are employed to average local gradients \cite{Li2023TopologyDependentPB}. Another line of research seeks to address the learning problem through distributed optimization, combining both local model updates and model aggregation \cite{niwa2020edge,yu2024provable,Li2024eusipco}. In this paper, we focus on the former aggregation-based approach, as it is more widely adopted. Specifically, the gradient aggregation is performed iteratively by each node in conjunction with its neighbors. The gradient averaging step of FedSGD is replaced with peer-to-peer information exchange, which can be expressed as:
\begin{align}
  \mathbf{g}^{(t+1)} = \mathbf{A} \mathbf{g}^{(t)},
\end{align}
where \(\mathbf{g} = \left( \mathbf{g}_1^\top, \mathbf{g}_2^\top, \ldots, \mathbf{g}_n^\top \right)^\top\) is the vector of stacked local gradients of all nodes and $\mathbf{A}$ is a weight matrix that characterizes the graph's connectivity: $\left\{\mathbf{A}\in \mathbb{R}^{n \times n} \,|\, \mathbf{A}_{i j}=0 \text { if }(i,j) \notin \mathcal{E} \text { and } i \neq j\right\}.$ To ensure convergence to the average, the weight matrix $\mathbf{A}$ must meet the following conditions: 
(i) $\mathbf{1}^{\top}\mathbf{A}=\mathbf{1}^{\top}$, 
(ii) $\mathbf{A}\mathbf{1}=\mathbf{1}$, and 
(iii) $\rho\left(\mathbf{A}-\frac{\mathbf{1} \mathbf{1}^{\top}}{n}\right)<1$, 
where $\rho(\cdot)$ denotes the spectral radius \cite{xiao2004fast}. 
This protocol is known as D-FedSGD \cite{Sun2021DecentralizedFA}. 


\subsection{Secure aggregation}
SA \cite{2016arXiv161104482B} is a cryptographic protocol that can be used to enhance privacy in FL. It encrypts participants' model updates, allowing the server to aggregate these updates without decrypting them, thereby safeguarding the sensitive data of each participant. Many approaches have been proposed to securely aggregate private data in distributed networks such as secure multiparty computation approaches \cite{li2019privacyA,li2019privacyS,tjell2020privacy,gupta2017privacy} and subspace perturbation approaches \cite{Jane2020ICASSP,li2020privacy,li2023adaptive}.  It ensures that individual model updates, such as gradients, remain confidential even from the central server or neighboring nodes in DFL. In CFL, the server only receives the aggregated model updates, while in DFL, each node can only access the aggregated gradients of its neighboring nodes. 

\subsection{Adversary model}

We consider a passive adversary model, often referred to as "honest-but-curious," which is frequently employed in the study of distributed systems. Passive adversaries operate by colluding with multiple nodes within the network, which are referred to as adversary nodes or corrupt nodes. 

\section{Privacy comparison via information-theoretical analysis}



\subsection{Privacy analysis by mutual information}


\vspace{-0mm}
We conduct the first information-theoretical analysis of this problem, leveraging mutual information \cite{cover2012elements} as the metric and focusing on two network topologies in FL: centralized and decentralized. Both topologies are evaluated under two operational modes, with and without SA, leading to four distinct configurations: CFL w/o SA, CFL w/ SA, DFL w/o SA, and DFL w/ SA, where \texttt{w/} and \texttt{w/o} denote "with" and "without", respectively. We consider $x$ as a realization of a random variable, the corresponding random variable will be denoted by $X$ (corresponding capital). Let \( \mathcal{A} \) represent the total information that an adversary can observe. The mutual information \( I(\mathcal{A}; G_i) \) quantifies how much the adversary learns about the local gradient  \( G_i \) at honest node \( i \) based on the information contained in \( \mathcal{A} \). 
Assuming a passive adversary could be any node in the network, denoted as \(N_k\), and following the setup in \cite{Pasquini2022OnTP} with a single corrupt node, we calculate the average privacy leakage across the network (results can be easily generalized to multiple corrupt nodes). Denote \( I_\text{FL}^\text{mode} \) as  the average mutual information of honest nodes revealed to the adversary, defined as: 
\vspace{-0mm}
\begin{equation} 
   I_\text{FL}^\text{mode} =  \frac{1}{n-1} \sum_{i\neq k}^{n} I(\mathcal{A};G_i) 
\end{equation}


For the sake of convenience, ``mode'' is omitted when it is ``w/o SA'' in the formula. The biggest difference among the four information exchange mechanisms is the difference in 
\(\mathcal{A}.\)
In CFL w/o SA mode, the adversary node can directly obtain all the information from the nodes, i.e., 
\(\mathcal{A} = \{G_j\}_{j=1}^{n} \). Thus, 

\begin{equation}
I_\text{CFL} =  \frac{1}{n}\sum_{i=1}^{n} I(\{G_j\}_{j=1}^{n}; G_i)
\label{eq:e1}
\end{equation}
In CFL w/ SA mode, the adversary node can only obtain the average of all nodes' gradients, i.e., 
\( \mathcal{A} = \frac{1}{n} \sum_{j=1}^{n} G_j \) and thus 
\begin{equation}
I_\text{CFL}^\text{SA} = \frac{1}{n(n-1)}\sum_{k=1}^{n}( \sum_{i\neq k}^{n} I(\sum_{j=1}^{n} \frac{1}{n} G_j ; G_i | G_k))
\label{eq:e2}
\end{equation}
In DFL w/o SA mode, the adversary node can directly obtain all the information from its neighboring nodes, i.e., 
\( \mathcal{A} = \{b_{kj} G_j\}_{j=1}^{n} \).
Here, \( b_{kj} = 1 \) if the adversary node is adjacent to node \( N_j \), otherwise \( b_{kj} = 0 \).
Therefore,
\begin{equation}
I_\text{DFL} = \frac{1}{n(n-1)}\sum_{k=1}^{n}( \sum_{i\neq k}^{n} I(\{b_{kj} G_j\}_{j=1}^{n} ; G_i|G_k))
\label{eq:e3}
\end{equation}

In DFL w/ SA mode, the adversary node can only obtain the weighted values of the information from its neighboring nodes, i.e., 
\( \mathcal{A} = \sum_{j=1}^{n} a_{kj} G_j \). 
Here, \( a_{kj} \) represents the weight between node $j$ and the adversary node, and \( \sum_{j=1}^{n} a_{kj} = 1 \). Therefore, the privacy leakage is expressed as:

\begin{equation}
I_\text{DFL}^\text{SA} = \frac{1}{n(n-1)}\sum_{k=1}^{n}(\sum_{i\neq k}^{n} I(\sum_{j=1}^{n} a_{kj} G_j ; G_i | G_k))
\label{eq:e4}
\end{equation}

Without loss of generality, assume that the random variables \(\{G_i\}_{i=1}^{n}\) are independent and identically  Gaussian distributed. Our main results are given below, which shows that for the case of w/o SA, the privacy leakage of CFL is strictly no smaller than DFL. While for the case of w/ SA, it is the other way around as CFL w/ SA denotes the ideal case where the central server is trustworthy.  

\begin{prop}\label{thm.1}

\begin{equation}
I_\text{CFL} \geq I_\text{DFL} > I_\text{DFL}^\text{SA} \geq I_\text{CFL}^\text{SA},
\label{eq:compare}
\end{equation}
where the first and third equality hold if and only if the underlying graph is fully connected; the second inequality holds for all connected graphs with more than two nodes.   
\end{prop}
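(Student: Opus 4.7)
The plan is to exploit three standard information-theoretic facts throughout: the data-processing inequality (DPI), the independence of the $G_j$'s (so conditioning on an independent variable leaves the MI unchanged), and the closed-form Gaussian identity $I(aX+N;X)=\tfrac{1}{2}\log(1+a^2\sigma_X^2/\sigma_N^2)$ for independent Gaussians. Since the four quantities in \eqref{eq:e1}--\eqref{eq:e4} differ only in the adversary's view $\mathcal{A}$, each inequality in \eqref{eq:compare} reduces to a structural comparison of that view for a fixed pair $(k,i)$ followed by averaging.

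\textbf{Step 1} ($I_\text{CFL}\ge I_\text{DFL}$). For every $(k,i)$ the DFL view $\{b_{kj}G_j\}_j\cup\{G_k\}$ is a subset of the CFL view $\{G_j\}_j$, so DPI gives the pointwise inequality, and independence of $G_k$ from the other $G_j$'s lets me drop the conditioning. Concretely, $I(\{G_j\};G_i)=H(G_i)$ and $I(\{b_{kj}G_j\};G_i\mid G_k)=H(G_i)\cdot\mathbf{1}[i\in\mathcal{N}(k)]$, so the averages evaluate to $I_\text{CFL}=H(G_i)$ and $I_\text{DFL}=\tfrac{2m}{n(n-1)}H(G_i)$; the inequality is then $2m\le n(n-1)$, with equality exactly when the graph is complete.

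\textbf{Step 2} ($I_\text{DFL}>I_\text{DFL}^\text{SA}$). DPI again gives $\ge$ since $\sum_j a_{kj}G_j$ is a deterministic function of $\{b_{kj}G_j\}_j$. For strictness when $n>2$ and the graph is connected, I would use that such a graph must contain a vertex $k$ of degree $\ge 2$ (otherwise it is a matching, which is disconnected for $n>2$). For that $k$ and any neighbor $i$, the unaggregated view contains $G_i$ itself, while the single weighted sum has only finite MI with $G_i$ owing to the other independent neighbors mixed in with positive weights; hence the corresponding term in the average is strictly smaller in the SA case, and the gap survives averaging.

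\textbf{Step 3} ($I_\text{DFL}^\text{SA}\ge I_\text{CFL}^\text{SA}$, the main obstacle). Both sides are averages of the Gaussian MI $\tfrac{1}{2}\log(1+w_i^2/\sum_{j\ne i,k}w_j^2)$, with $w_j=1/n$ for CFL and $w_j=\mathbf{A}_{kj}$ (supported on $\mathcal{N}(k)\cup\{k\}$) for DFL. Fixing $k$, I would study the function
\[
F_k(\mathbf{w})=\sum_{i\ne k}\tfrac{1}{2}\log\!\left(1+\frac{w_i^2}{\sum_{j\ne i,k}w_j^2}\right)
\]
over $\{w_j\ge0:\sum_{j\ne k}w_j\le 1\}$ and show by a Schur-type symmetrization that $F_k$ is minimized when the $n-1$ off-diagonal weights are all positive and equal to $1/n$, i.e., when $k$ is adjacent to every other node and $\mathbf{A}$ is the mean matrix. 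A clean specialization is uniform weights over $d$ neighbors: the per-pair contribution becomes $\tfrac{d}{2(n-1)}\log(d/(d-1))$, and $d\mapsto d\log(d/(d-1))$ is strictly decreasing, which already yields strict inequality whenever some vertex has degree $<n-1$.

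The hard part will be Step 3: the first two inequalities are DPI plus adjacency bookkeeping, whereas Step 3 requires a genuine convex/Schur-style inequality over the constrained weight rows and a careful extraction of the equality case (every row of $\mathbf{A}$ uniform off-diagonal, equivalently the complete graph with the mean matrix). A secondary technical subtlety is the ``$\infty$ versus $\infty$'' comparison in Step 1 when the $G_j$ are continuous Gaussians; I would sidestep it by $\varepsilon$-smoothing or quantization, so that $H(G_i)$ is finite and all four quantities lie in $[0,H(G_i)]$ with unambiguous equality conditions.
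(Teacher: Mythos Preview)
Your overall plan matches the paper's structure, and Steps~1 and~2 are essentially what the paper does (the paper is even terser, invoking only ``subadditivity of entropy'' for Step~2; your strictness argument via a degree-$\ge 2$ vertex is more explicit but equivalent). The one place you diverge is Step~3, and there the paper's argument is considerably sharper than the Schur-type symmetrization you propose.

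Specifically, after reducing as you do to comparing, for each fixed $k$,
\[
\sum_{i\ne k}\tfrac{1}{2}\log\frac{S_k}{S_k-a_{ki}^2},\qquad S_k:=\sum_{j\ne k}a_{kj}^2,
\]
the paper observes that the reciprocals $\tfrac{S_k-a_{ki}^2}{S_k}$ sum over $i\ne k$ to exactly $n-2$, regardless of the weights. A single application of the geometric--harmonic mean inequality then gives
\[
\Bigl(\prod_{i\ne k}\tfrac{S_k}{S_k-a_{ki}^2}\Bigr)^{1/(n-1)}\ \ge\ \frac{n-1}{\sum_{i\ne k}\tfrac{S_k-a_{ki}^2}{S_k}}\ =\ \frac{n-1}{n-2},
\]
which is precisely the per-$k$ lower bound you need, with equality iff all $a_{ki}^2$ are equal for $i\ne k$, i.e., $k$ is adjacent to every other node. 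Averaging over $k$ finishes the proof and the equality characterization. Your uniform-weight specialization and the monotonicity of $d\mapsto d\log\tfrac{d}{d-1}$ is a correct consequence of this, but does not cover non-uniform rows; and while the Schur-convexity route could in principle be pushed through (the per-$k$ functional is scale-invariant and symmetric in the off-diagonal weights), it requires an additional verification that the paper avoids entirely with the one-line GM--HM identity above.
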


\begin{proof}
With \eqref{eq:e1} we have $I_\text{CFL} =I(G_i;G_i)$. Similarly, for \eqref{eq:e3} we have  $I_\text{DFL} =\frac{1}{n(n-1)}\sum_{k=1}^{n} \sum_{i\neq k}^{n} b_{ki}I(G_i;G_i)$ where $\sum_{i=1}^{n} \sum_{\substack{ k \neq i}}^{n} b_{ki} = 2m
$ as $b_{ki}=1$ if $(k,i)\in \mathcal{E}$, recall $m$ is the number of edges in the network and $m\leq \frac{n(n-1)}{2}$. Hence $I_\text{DFL} \leq  \frac{2m}{n(n-1)}I(G_i;G_i)\leq I_\text{CFL}$, which completes the proof of the first inequality, and the equality holds only if the graph is fully connected, i.e., $m=\frac{n(n-1)}{2}$. The second inequality follows simply from subadditivity of entropy \cite{lieb2002some}. We now proceed to the third inequality, given that the differential entropy of Gaussian distribution is \(h(X) = \frac{1}{2} \log (2\pi e \sigma^2)\) where $\sigma^2$ is the variance. Thus, \eqref{eq:e2} becomes 


\begin{align}
I_\text{CFL}^\text{SA}
& \overset{(a)}{=} \frac{1}{n(n-1)}\sum_{k=1}^{n}( \sum_{i\neq k}^{n} I(\sum_{j\neq k}^{n} G_j ; G_i )) \nonumber \\
& \overset{(b)}{=} \frac{1}{n(n-1)}\sum_{k=1}^{n}\sum_{j\neq k}^{n}( h(\sum_{j\neq k}^{n} G_j)-h(\sum_{j\neq k}^{n} G_j-G_i)) \nonumber \\
& \overset{(c)}{=} \frac{1}{2n(n-1)} \log (2\pi e ((\frac{n-1}{n-2})^{(n-1)n}) \nonumber\\
&=\frac{1}{2} \log (2\pi e ((\frac{n-1}{n-2})),
\label{gaussian:CFL}
\end{align}

where (a) uses the fact that $G_k$ is independent of others; (b) follows from the definition of mutual information $I(X+Y;X)=h(X+Y)-h(X+Y|X)$ and the fact that all $\{G_i\}_{i\in \mathcal{V}}$ are independent of each other; (c) uses the definition of differential entropy of Gaussian distribution. 
Thus \eqref{eq:e4} becomes

\begin{align}&I_\text{DFL}^\text{SA} 
 \overset{(a)}{=}   \frac{1}{n(n-1)}\sum_{k=1}^{n}( \sum_{i\neq k}^{n} I(\frac{1}{n-1}\sum_{j\neq k}^{n} a_{kj} G_j ; G_i )) \nonumber\\
 &\overset{(b)}{=}   \frac{1}{2n(n-1)}\sum_{k=1}^{n}( \sum_{i\neq k}^{n} \log (2\pi e \frac{\sum_{j=1}^{n}a_{kj}^2-a_{kk}^2}{\sum_{j=1}^{n}a_{kj}^2-a_{kk}^2-a_{ki}^2})) \nonumber \\
  &\overset{(c)}{=}  \frac{1}{2n(n-1)} \log 2\pi e ( \Pi_{k} \Pi_{i \neq k} ( \frac{\sum_{j=1}^{n} a_{kj}^2 - a_{kk}^2}{\sum_{j=1}^{n} a_{kj}^2 - a_{kk}^2 - a_{ki}^2} )) \nonumber \\
  &\overset{(d)}{\geq} \frac{1}{2n(n-1)} \log 2\pi e (\frac{n-1}{\sum_{{i \neq k}} {\frac{\sum_{j=1}^{n} a_{kj}^2 - a_{kk}^2 - a_{ki}^2}{\sum_{j=1}^{n} a_{kj}^2 - a_{kk}^2}}})^{(n-1)n} \nonumber\\
&= \frac{1}{2} \log 2\pi e(\frac{n-1}{ {\frac{(n-2){(\sum_{j=1}^{n} a_{kj}^2 - a_{kk}^2)}}{\sum_{j=1}^{n} a_{kj}^2 - a_{kk}^2}}}) =\frac{1}{2} \log (2\pi e(\frac{n-1}{n-2}))) \nonumber
\end{align}
where (a) and (b) follow similarly as (a) to (c) in \eqref{gaussian:CFL};  

(c) follows from product rule of logarithm; (d) follows from geometric mean - harmonic mean inequality\cite{hardy1952inequalities} where $\sqrt[n]{x_1 \cdot x_2 \cdot \dots \cdot x_n} \geq \frac{n}{\frac{1}{x_1} + \frac{1}{x_2} + \dots + \frac{1}{x_n}}$
with equality if and only if \( x_1 = x_2 = \dots = x_n \). Hence, the proof is now complete.
\end{proof}

\subsection{Validation of the main result}

To verify the findings from Proposition \ref{thm.1}, we used the NPEET library 

\cite{steeg2014npeet} to compute mutual information. We performed Monte Carlo simulations by generating 1000 samples from each random variable, modeled as Gaussian with a distribution $\mathcal{N}(0,1)$. Fig. \ref{fig:mi} compares the information leakage across four distinct scenarios, with network sizes varying from 10 to 50. We also evaluated different network topologies with varying densities, defined as $\frac{2m}{n(n-1)}$. The results are consistent with our theoretical analysis: 1) Information leakage in both DFL cases is upper-bounded by the CFL case w/o SA, and lower-bounded by the CFL case w/ SA. 2) In the DFL w/ SA scenario, information leakage decreases as the network becomes less sparse, approaching the lower bound set by the CFL w/ SA case as the network becomes fully connected. In contrast, the trend for DFL w/o SA is the opposite.    

\begin{figure}
\vspace{-8pt} 
    \centering
    \includegraphics[width=0.7\linewidth]{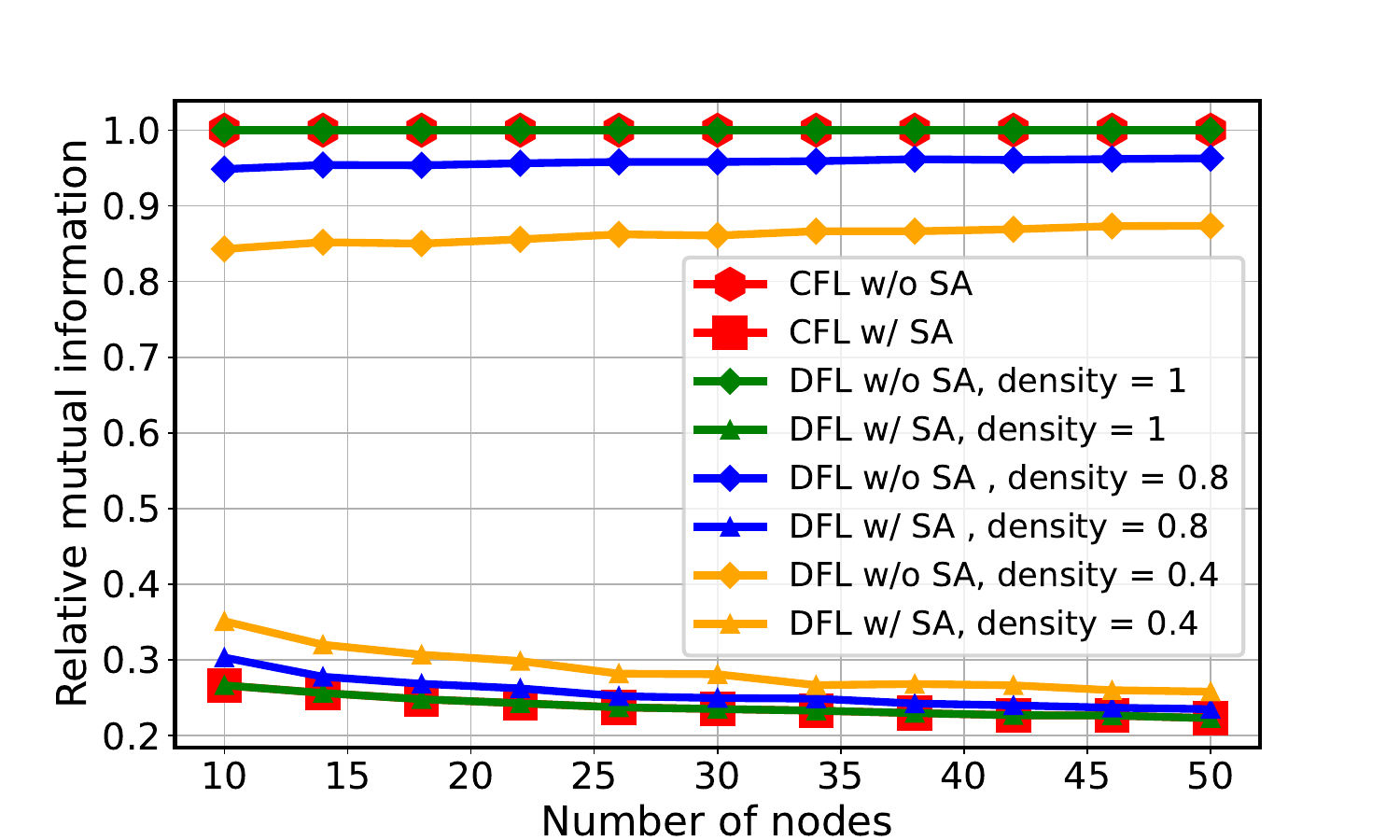}
  
    \caption{Relative mutual information ${I_{\text{FL}}^{\text{mode}}}/{I_{\text{CFL}}}$ as a function of number of nodes $n$ in the network for CFL w/o SA, CFL w/ SA, DFL w/o SA and DFL w/ SA, wherein three different network densities for DFL are considered.} 
    \label{fig:mi}
 
\end{figure}
\begin{figure*}[htbp]
    \centering
    \begin{tabular}{c c c c c}
        \begin{minipage}[b]{0.09\linewidth}  
            \centering
            \vspace{0.2cm} 
            \scriptsize{\textbf{Ground truth}}\\[0.22cm]
            \scriptsize{\textbf{CFL w/o SA}}\\[0.22cm]
            \scriptsize{\textbf{DFL w/o SA}}\\[0.22cm]
            \scriptsize{\textbf{DFL w/ SA}}\\[0.22cm]
            \scriptsize{\textbf{CFL w/ SA}}\\[0.22cm]
            \scriptsize{\textbf{}}\\[0.8cm]
        \end{minipage} &
        
        \begin{minipage}[b]{0.193\linewidth}  
            \centering
            \includegraphics[height=0.4cm]{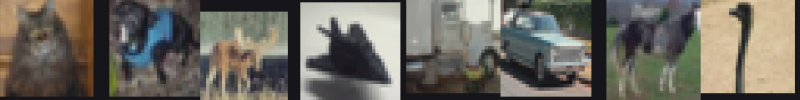}\\[0.06cm]  
            \includegraphics[height=0.40cm]{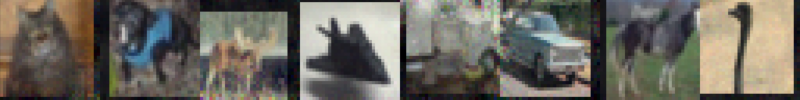}\\[0.06cm]
            \includegraphics[height=0.4cm]{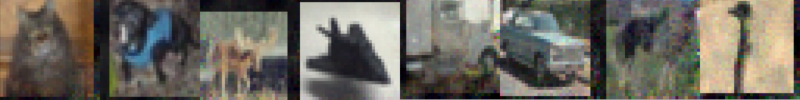}\\[0.06cm]
            \includegraphics[height=0.4cm]{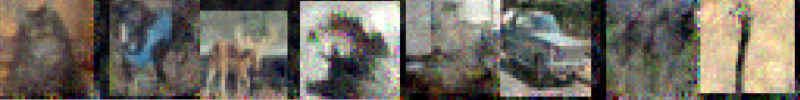}\\[0.06cm]
            \includegraphics[height=0.4cm]{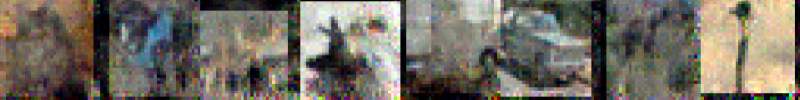}
            \text{(a) CIFAR-10}
        \end{minipage} &
        
        \begin{minipage}[b]{0.193\linewidth}  
            \centering
            \includegraphics[height=0.4cm]{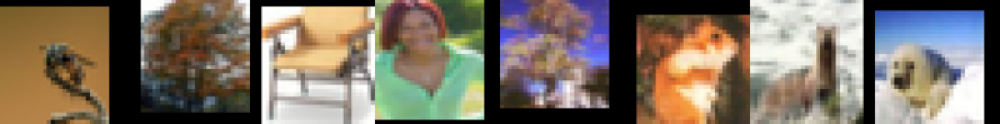}\\[0.06cm]  
            \includegraphics[height=0.4cm]{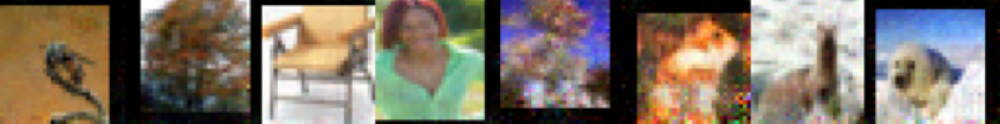}\\[0.06cm]
            \includegraphics[height=0.4cm]{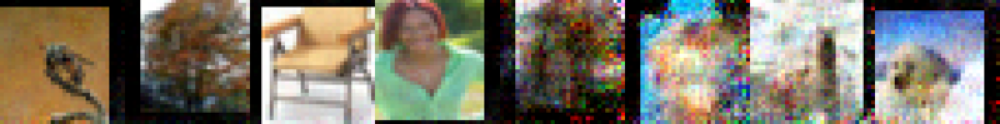}\\[0.06cm]
            \includegraphics[height=0.4cm]{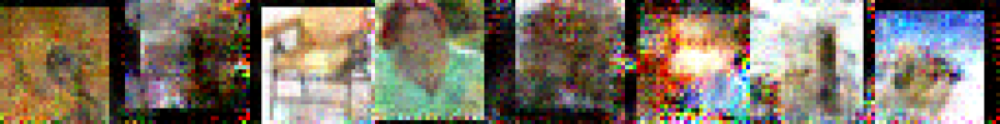}\\[0.06cm]
            \includegraphics[height=0.4cm]{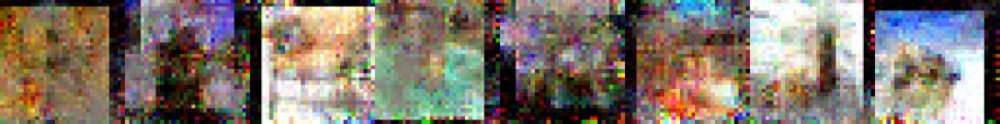}
            \text{(b) CIFAR-100}
        \end{minipage} &
        
        \begin{minipage}[b]{0.193\linewidth}  
            \centering
            \includegraphics[height=0.4cm]{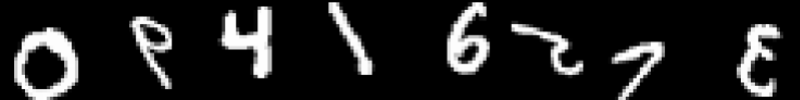}\\[0.06cm]  
            \includegraphics[height=0.4cm]{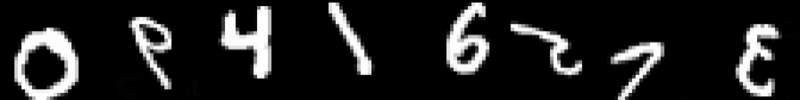}\\[0.06cm]
            \includegraphics[height=0.4cm]{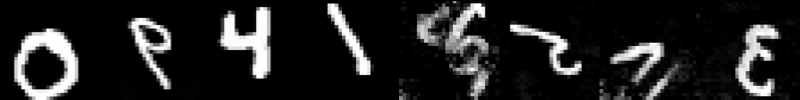}\\[0.06cm]
            \includegraphics[height=0.4cm]{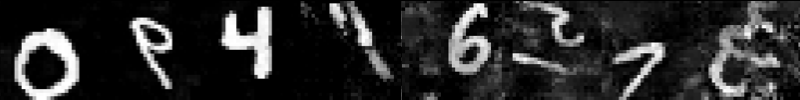}\\[0.06cm]
            \includegraphics[height=0.4cm]{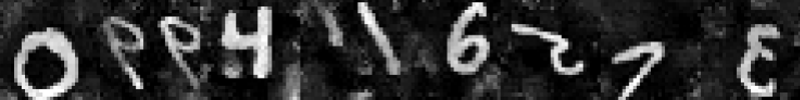}
            \text{(c) MNIST}
        \end{minipage} &

        \begin{minipage}[b]{0.19\linewidth}  
            \centering
            \includegraphics[width=1.08\linewidth]{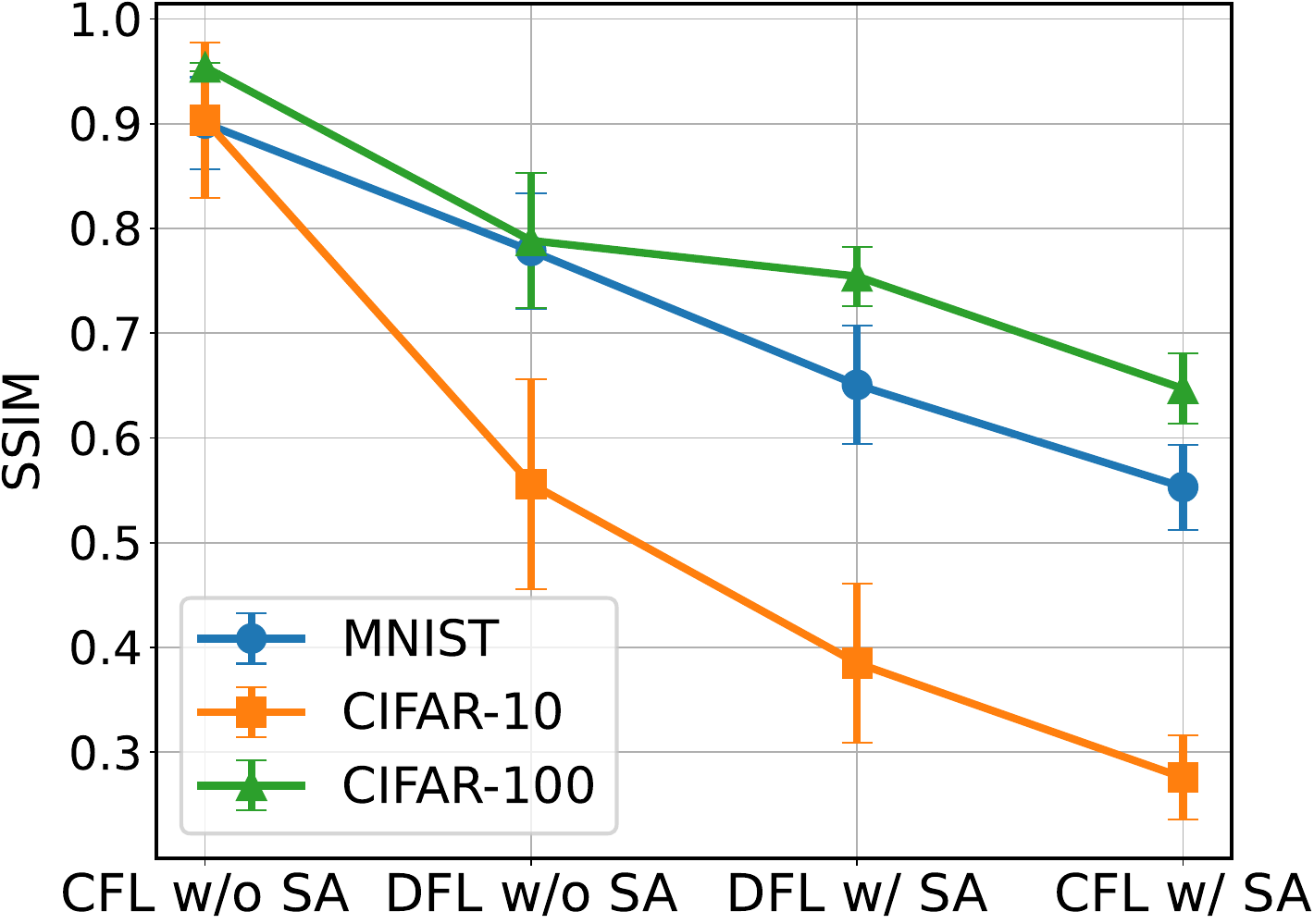}  
            \text{(d) SSIM }
        \end{minipage}
    \end{tabular}

    \caption{Samples images of ground truth and reconstructed inputs by inverting gradients on four different protocols of three datasets: (a) CIFAR-10, (b) CIFAR-100 and (c) MNIST.
    (d) Averaged SSIM values of all reconstructed inputs in each dataset for four modes.}
    \label{fig:overall_comparison}

\end{figure*}

\section{Privacy comparison via empirical attacks}
In this section, we validate our theoretical results by conducting practical privacy attacks using a gradient inversion attack \cite{geiping2020inverting},  as the theoretical analysis focuses on the gradients exchanged in the network. 

\subsection{Setup} 
For our experiments, we use the ResNet18 model with three datasets: CIFAR-10, CIFAR-100, and MNIST. We assume there are $10$ clients in the network. The learning protocol for CFL is FedSGD and for DFL is D-FedSGD. 
To assess the privacy risk, we reconstruct images by minimizing the loss over 1000 iterations. The attack learning rate is set to 0.1 for CIFAR-10 and CIFAR-100, and 1 for MNIST. Other experimental parameters remain consistent across all setups. The quality of the reconstructed inputs is measured using the Structural Similarity Index Measure (SSIM)~\cite{hore2010image}, which ranges from -1 to 1. A value of $\pm1$ indicates perfect resemblance, while 0 represents no correlation.

\subsection{Attack results}
\vspace{-1mm}
\textbf{Visualization results and comparisons}: In Fig. \ref{fig:overall_comparison} we present the reconstructed images from applying the gradient inversion attack on the four cases, using CIFAR-10, CIFAR-100, and MNIST. For DFL topology, we assume that the adversary node has 4 neighboring nodes and 5 non-neighboring nodes. For display purposes, we select images from 8 nodes: the left 4 images represent neighboring nodes, and the right 4 represent non-neighboring nodes. When applying gradient inversion attack to non-neighboring nodes, DFL w/o SA uses the average gradient of all non-neighboring nodes, and DFL w/ SA uses the average gradient of all honest nodes. In the CFL protocol, all images are treated equally.  The image quality visibly degrades from top to bottom, which is quantitatively confirmed in plot (d) with decreasing SSIM scores. Specifically, CFL w/o SA shows the highest reconstruction quality and thus poses the greatest privacy risk. In contrast, CFL w/ SA achieves the lowest SSIM, as expected, due to the presence of a trusted central server. These results support the inequality (\ref{eq:compare}) stated in Proposition \ref{thm.1}, confirming our theoretical findings.

\textbf{Impact of graph density}: In addition to the inequality results presented in Proposition \ref{thm.1}, we demonstrate that equality can only be achieved when the graph is fully connected. This observation is validated in Fig. \ref{fig:density}, where we analyze the impact of graph density on the quality of input reconstruction. See the case of DFL w/o SA as an example, when the graph topology becomes denser, the input reconstruction quality improves. This is because, with a higher density, each node has more neighbors, thereby revealing more local gradient information. When the graph is fully connected (i.e., the density equals \(1\)), the SSIM approaches that of the CFL w/o SA case, indicating that the privacy risk is at its upper bound in this scenario. In contrast, a reverse trend is observed in DFL w/ SA. As the graph density decreases, the privacy protection improves, consistent with the expected behavior and our theoretical results. This underscores the pivotal role of graph density, along with the use of SA, in influencing the degree of privacy.

\begin{figure}
\vspace{-8pt} 
    \centering
    \includegraphics[width=0.6\linewidth]{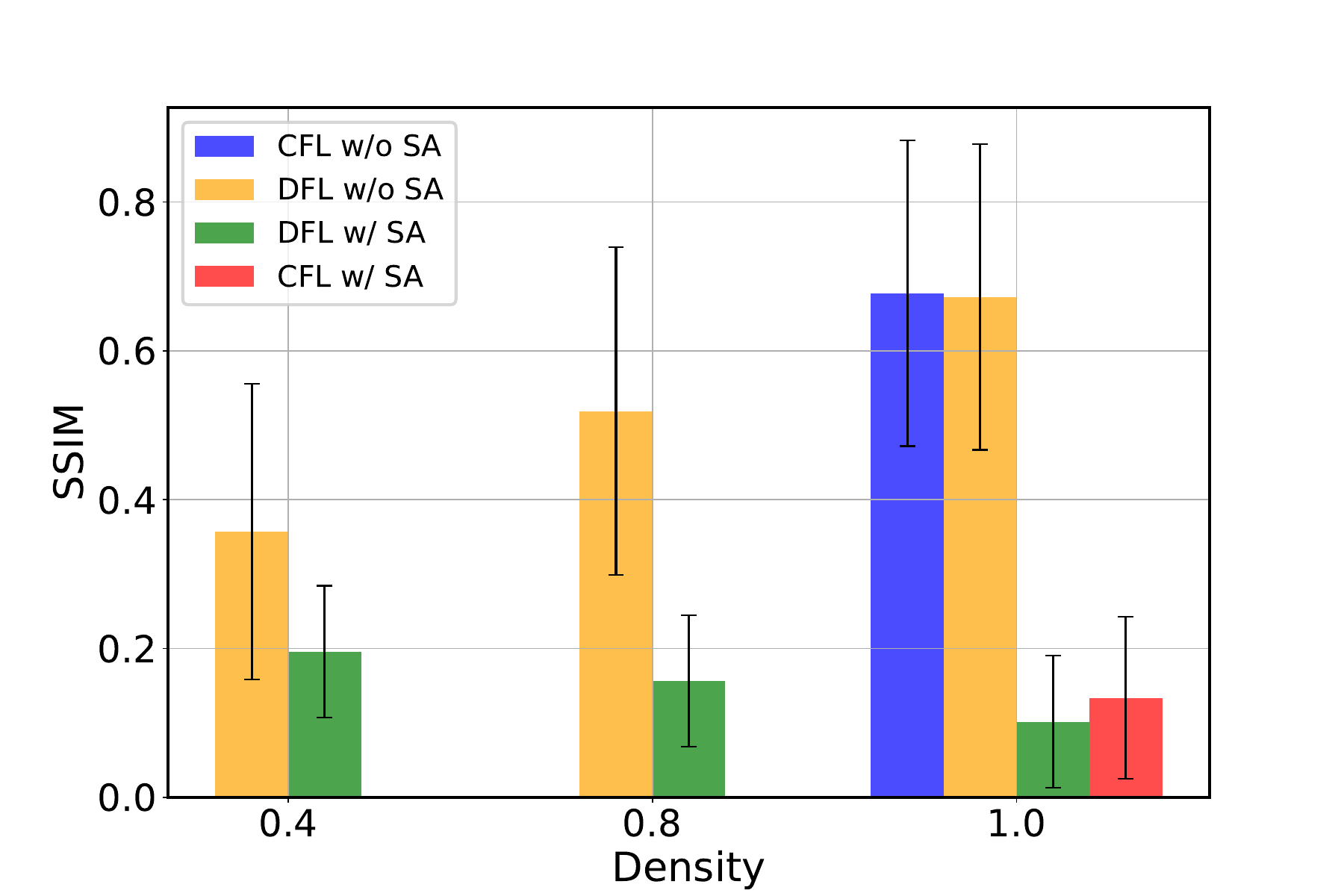}

    \caption{SSIM values of all reconstructed inputs across three network densities $0.4, 0.8$ and $1$ for DFL cases, alongside both CFL w/ and w/o SA cases.}
    \label{fig:density}
    \vspace{-8pt} 
\end{figure}

\subsection{Discussion}

\begin{figure}
    \centering
    \includegraphics[width=0.52\linewidth]{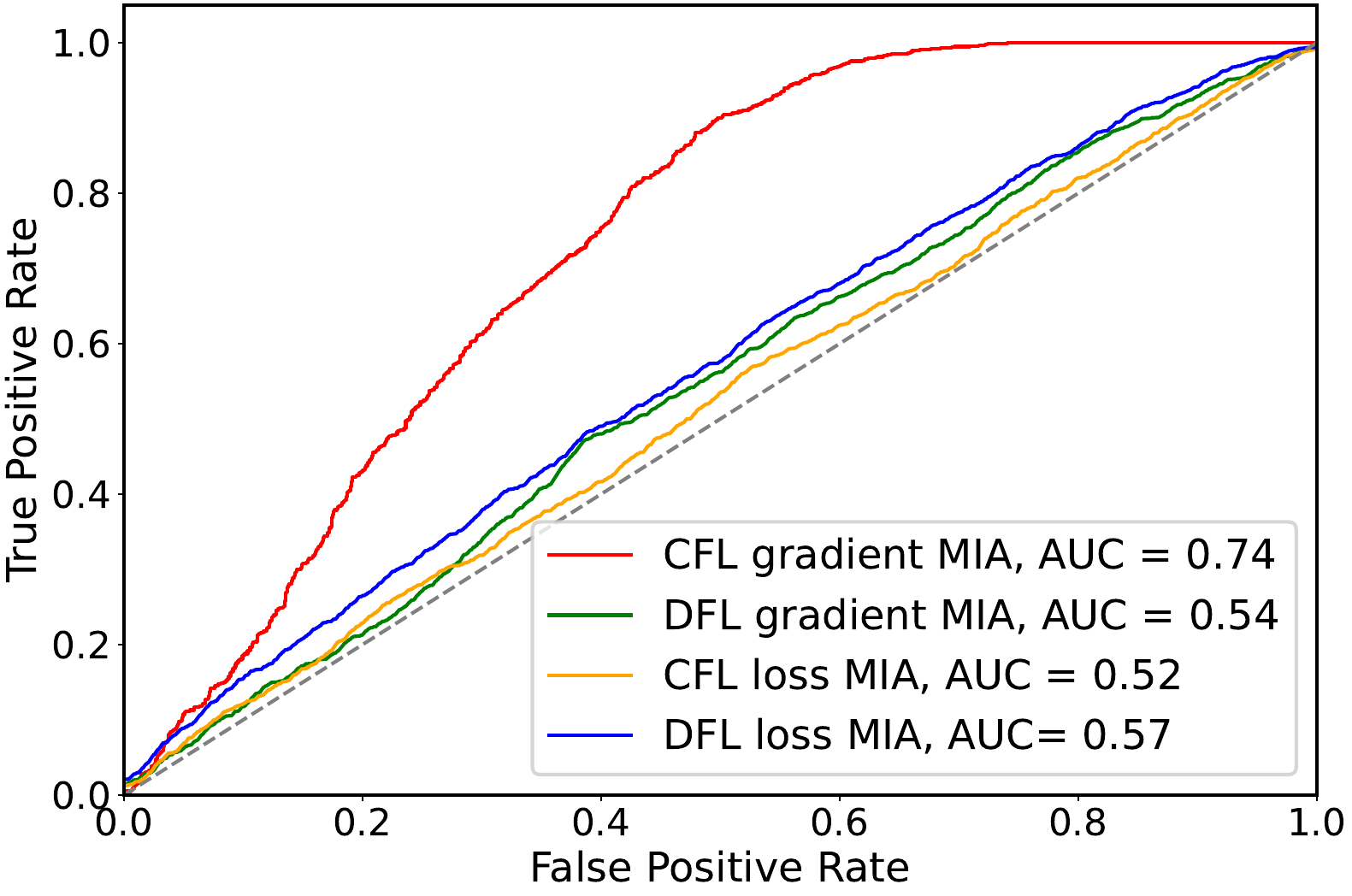}

    \caption{ROCs and attack success rates of two MIAs for CFL and DFL w/o SA cases. The AUC values represent the attack success rates.}
\vspace{-10pt} 
    \label{fig:roc}

\end{figure}

We now discuss why our conclusion contradicts the findings of previous work in \cite{Pasquini2022OnTP} from Pasquini et, al., which states that DFL offers no privacy benefits compared to CFL. First, their assumption for certain contexts does not reflect real-world setting generally, specifically, they assume a scenario in which the server in CFL remains uncompromised and implicitly the corrupt nodes in DFL have no two-hop honest neighbors \cite[Section 4]{Pasquini2022OnTP}. While this assumption is valid in certain contexts, it is quite restrictive, as one of the core motivations for DFL is the potential absence of a trusted server in practical scenarios. Second, their conclusions are based largely on empirical results without a rigorous theoretical analysis and rely on a single type of MIA  \cite{song2021systematic}. However, several studies \cite{rezaei2021difficulty, Carlini2021MembershipIA, hintersdorftrust, Li2022OnTP} have shown that this MIA tends to produce unreliable results. Therefore, using this MIA may not accurately reflect privacy leakage in FL systems.

To validate these concerns, in Fig. \ref{fig:roc} we conducted experiments with two MIAs on both CFL and DFL w/o SA. Indeed, the results are inconsistent with each other, further emphasizing the need for careful attack selection in FL. In our work, we used gradient inversion attacks, specifically tailored to FL, which align with our theoretical analysis.

\section{Conclusion}

In this paper, we compare privacy leakage between CFL and DFL from a novel perspective using information theory, specifically mutual information. We derive upper bounds for privacy leakage in both CFL and DFL and validate these theoretical results through simulation experiments. Our findings indicate that DFL generally exhibits lower privacy leakage than CFL in practical scenarios. Additionally, we perform empirical privacy attacks to further support our theoretical conclusions.  

\bibliographystyle{unsrt}  
\bibliography{references}

\end{document}